\titleformat{\section}[block]{\bfseries}{\thesection}{5pt}{\filcenter\MakeUppercase}{}
\titlespacing*{\section}{0pt}{*4}{*2}
\titleformat{\subsection}[runin]{\bfseries}{\thesubsection}{5pt}{\scshape}{}
\titlespacing*{\subsection}{0pt}{*1}{*2}
\newcommand{\paren}[1]{\left(#1\right)}
\newcommand{\brac}[1]{\left[#1\right]}
\newcommand{\curly}[1]{\left\{#1\right\}}
\newcommand{\mc}[1]{\mathcal{#1}}
\newcommand{\suml}[3]{\sum_{#1}^{#2}#3}
\newcommand{\prodl}[3]{\prod_{#1}^{#2}#3}
\newcommand{\pr}[0]{^\prime}
\newcommand{\dg}[0]{^\dagger}
\newcommand{\wtil}[1]{\widetilde{#1}}
\newcommand{\swap}[0]{{\textsc{swap}}}
\newcommand{\px}[1]{\sigma_{x}^{#1}}
\newcommand{\py}[1]{\sigma_{y}^{#1}}
\newcommand{\pz}[1]{\sigma_{z}^{#1}}
\newcommand{\ha}[0]{\ensuremath{H}}
\newcommand{\hj}[0]{\ensuremath{H_{J}}}
\newcommand{\hh}[0]{\ensuremath{H_{h}}}
\newcommand{\uti}[0]{\ensuremath{U}}
\newcommand{\utd}[0]{\ensuremath{V}}
\newcommand*{\idm}{\ensuremath{\mathbbm{1}}}
\renewcommand{\vec}[1]{\ensuremath{\boldsymbol{#1}}}
\DeclareMathOperator{\diag}{diag}
\DeclareMathOperator{\hammingweight}{ham}
\DeclareMathOperator{\rt}{rt}
\DeclareMathOperator{\qrt}{qrt}
\DeclareMathOperator{\hqrt}{hqrt}
\DeclareMathOperator{\rs}{rs}
\DeclareMathOperator{\qrs}{qrs}
\DeclareMathOperator{\twoqubitgate}{tg}
\DeclareMathOperator{\Ex}{\symbf{E}}
\DeclareMathOperator{\Prob}{P}
\DeclareMathOperator{\tr}{Tr}
\DeclareMathOperator{\revfull}{R}
\DeclareMathOperator{\entanglementOp}{E}
\DeclarePairedDelimiter\floor{\lfloor}{\rfloor}
\DeclarePairedDelimiter\bra{\langle}{\vert}
\DeclarePairedDelimiter\ket{\vert}{\rangle}
\DeclarePairedDelimiter\set{\{}{\}}
\DeclarePairedDelimiter\abs{\lvert}{\rvert}
\DeclarePairedDelimiter{\norm}{\lVert}{\rVert}
\DeclarePairedDelimiterXPP\tg[1]{\twoqubitgate}{(}{)}{}{#1}
\DeclarePairedDelimiterXPP\bigo[1]{O}{(}{)}{}{#1}
\DeclarePairedDelimiterXPP\bigomega[1]{\Omega}{(}{)}{}{#1}
\DeclarePairedDelimiterXPP\diagonal[1]{\diag}{(}{)}{}{#1}
\DeclarePairedDelimiterXPP\hamming[1]{\hammingweight}{(}{)}{}{#1}
\DeclarePairedDelimiterXPP\rnumber[1]{\rt}{(}{)}{}{#1}
\DeclarePairedDelimiterXPP\qrnumber[1]{\qrt}{(}{)}{}{#1}
\DeclarePairedDelimiterXPP\hrnumber[1]{\hqrt}{(}{)}{}{#1}
\DeclarePairedDelimiterXPP\rsize[1]{\rs}{(}{)}{}{#1}
\DeclarePairedDelimiterXPP\qrsize[1]{\qrs}{(}{)}{}{#1}
\DeclarePairedDelimiterXPP\hierprod[2]{Π_{#1}}{(}{)}{}{#2}
\DeclarePairedDelimiterXPP\modular[1]{M_G}{(}{)}{}{#1}
\DeclarePairedDelimiterXPP\probability[1]{\Prob}{[}{]}{}{#1}
\DeclarePairedDelimiterXPP\expectation[1]{\Ex}{[}{]}{}{#1}
\DeclarePairedDelimiterXPP\entanglement[1]{\entanglementOp}{(}{)}{}{#1}
\DeclarePairedDelimiterXPP\rev[1]{R}{[}{]}{}{#1}
\newcommand{\parstring}[1]{P_{[#1]}}
\definecolor{dark-red}{rgb}{0.4,0.15,0.15}
\definecolor{dark-blue}{rgb}{0.15,0.15,0.4}
\definecolor{medium-blue}{rgb}{0,0,0.5}
\definecolor{mycomment}{rgb}{0.3,0.7,0.8}
\definecolor{mygray}{rgb}{0.5,0.5,0.5}
\definecolor{lightgray}{rgb}{0.95,0.95,0.95}
\definecolor{mymauve}{rgb}{0.58,0,0.82}
\crefname{figure}{Figure}{Figures}
\crefname{equation}{}{} % Do not print "Eq."
\Crefname{equation}{Equation}{Equations}
\newtheorem{theorem}{Theorem}
\newtheorem{lemma}[theorem]{Lemma}
\theoremstyle{definition}
\theoremstyle{definition}
\newtheorem{protocol}[theorem]{Protocol}
\crefname{protocol}{Protocol}{Protocols}
\begin{document}

% Use the \preprint command to place your local institutional report
% number in the upper righthand corner of the title page in preprint mode.
% Multiple \preprint commands are allowed.
% Use the 'preprintnumbers' class option to override journal defaults
% to display numbers if necessary
%\preprint{}

%Title of paper
\title{Nearly optimal time-independent reversal of a spin chain}

% repeat the \author .. \affiliation  etc. as needed
% \email, \thanks, \homepage, \altaffiliation all apply to the current
% author. Explanatory text should go in the []'s, actual e-mail
% address or url should go in the {}'s for \email and \homepage.
% Please use the appropriate macro foreach each type of information

% \affiliation command applies to all authors since the last
% \affiliation command. The \affiliation command should follow the
% other information
% \affiliation can be followed by \email, \homepage, \thanks as well.
\author{Aniruddha Bapat}
\email{ani@umd.edu}
\affiliation{Joint Center for Quantum Information and Computer Science, NIST/University of Maryland, College Park, Maryland 20742, USA}
\affiliation{Joint Quantum Institute, NIST/University of Maryland, College Park, Maryland 20742, USA}
\author{Eddie Schoute}
\email{eschoute@umd.edu}
\affiliation{Joint Center for Quantum Information and Computer Science, NIST/University of Maryland, College Park, Maryland 20742, USA}
\affiliation{Department of Computer Science, University of Maryland, College Park, Maryland 20742, USA}
\affiliation{Institute for Advanced Computer Studies, University of Maryland, College Park, Maryland 20742, USA}
\author{Alexey V.~Gorshkov}
\email{gorshkov@umd.edu}
\affiliation{Joint Center for Quantum Information and Computer Science, NIST/University of Maryland, College Park, Maryland 20742, USA}
\affiliation{Joint Quantum Institute, NIST/University of Maryland, College Park, Maryland 20742, USA}
\author{Andrew M.~Childs}
\email{amchilds@umd.edu}
\affiliation{Joint Center for Quantum Information and Computer Science, NIST/University of Maryland, College Park, Maryland 20742, USA}
\affiliation{Department of Computer Science, University of Maryland, College Park, Maryland 20742, USA}
\affiliation{Institute for Advanced Computer Studies, University of Maryland, College Park, Maryland 20742, USA}
%\homepage[]{Your web page}
%\thanks{}

%Collaboration name if desired (requires use of superscriptaddress
%option in \documentclass). \noaffiliation is required (may also be
%used with the \author command).
%\collaboration can be followed by \email, \homepage, \thanks as well.
%\collaboration{}
%\noaffiliation

\date{\today}

\begin{abstract}
	We propose a time-independent Hamiltonian protocol for the reversal of qubit ordering in a chain of $N$ spins.
	Our protocol has an easily implementable nearest-neighbor, transverse-field Ising model Hamiltonian
	with time-independent, non-uniform couplings.
	Under appropriate normalization,
	we implement this \emph{state reversal} three times faster than a naive approach using \swap{} gates,
	in time comparable to a protocol of \citeauthor{Raussendorf2005} [\emph{Phys.\,Rev.\,A} 72, 052301 (2005)] that requires dynamical control.
	We also prove lower bounds on state reversal by using results on the entanglement capacity of Hamiltonians and show that we are within a factor $1.502(1+1/N)$ of the shortest time possible.
	Our lower bound holds for all nearest-neighbor qubit protocols with arbitrary finite ancilla spaces
	and local operations and classical communication. Finally, we extend our protocol to an infinite family of nearest-neighbor, time-independent Hamiltonian protocols for state reversal. This includes chains with nearly uniform coupling that may be especially feasible for experimental implementation.
\end{abstract}

% insert suggested keywords - APS authors don't need to do this
%\keywords{}

%\maketitle must follow title, authors, abstract, and keywords

\maketitle

% body of paper here - Use proper section commands
% References should be done using the \cite, \ref, and \label commands
% Put \label in argument of \section for cross-referencing
% \section{\label{}}

Quantum information transfer is a fundamental operation in quantum physics, and fast, accurate protocols for transferring quantum states across a physical system are likely to play a key role in the design of quantum computers and networks~\cite{divincenzo2000,kimble2008}.
For example, quantum information transfer can be used to establish long-range entanglement and is also useful for qubit routing in quantum architectures with limited connectivity~\cite{Beals2013,Childs2019}.
Extensive work has studied the implementation of various information transfer protocols, often via Hamiltonian dynamics on spin chains~\cite{Bose2007}.

Information transfer in Hamiltonian systems is governed by the spread of entanglement and has close links to Lieb-Robinson bounds~\cite{Lieb1972}, entanglement area laws~\cite{Eisert2010}, and algorithms for quantum simulation~\cite{Haah2018}.
Fundamental limits to the rate of entanglement growth are set by bounds on the \emph{asymptotic entanglement capacity}~\cite{Duer2001,Childs2003,Childs2004,Bennett2003} and more recent small incremental entangling  theorems~\cite{Bravyi2007, Acoleyen2013, Audenaert2014,Marien2016}.
We show that these limits can also be used to obtain lower bounds on the execution time of Hamiltonian protocols for information transfer.
This raises the question of whether a protocol can achieve optimality by saturating the bound.

\emph{Quantum state transfer} studies protocols for moving qubits through a spin chain~\cite{Bose2003}.
Long-range interactions can be used to speed up protocols~\cite{Gualdi2008},
but here we consider only nearest-neighbor interactions.
State transfer protocols usually assume the intermediate medium to be in a known initial state~\cite{Christandl2004,Christandl2005,Venuti2007,Banchi2011} or allow it to change in an unknown or non-trivial manner~\cite{Yao2011,Franco2008}.
Such protocols are not directly applicable when some or all spins in the chain contain data qubits that need to be transferred or maintained.

Protocols for \emph{state reversal}, also known as \emph{state mirroring}~\cite{Albanese2004}, take steps towards addressing this issue.
State reversal reverses any input state on a spin chain about the center of the chain.
Specifically, with qubit labeling $1,2,\dots,N$, state reversal corresponds to the unitary
\begin{equation}
  \label{eq:SWAPrb}
  \revfull \coloneqq \prodl{k=1}{\lfloor\frac{N}{2}\rfloor}{\swap{}_{k,N+1-k}}
\end{equation}
up to a \emph{global phase}, which is independent of the state.
State reversal is potentially a useful subroutine for the more general task of qubit routing, where we wish to apply arbitrary permutations to the qubits.
Early results in this area require the state to be in the single-excitation subspace~\cite{Shi2005}
or introduce phases in the final state that depend on a non-local property such as the number of qubits in state $\ket{1}$~\cite{Albanese2004,Karbach2005}.
These limitations were later removed by time-dependent protocols for state reversal~\cite{Raussendorf2005,Fitzsimons2006,Kumar2015}.

In this work, we propose the first \emph{time-independent} protocol for state reversal using nearest-neighbor interactions.
We show that the execution time of our protocol is nearly optimal, comparable to the time-dependent protocol given in~\cite{Raussendorf2005}.
However, as our protocol does not require dynamical control but only pre-engineered couplings, we expect it to be more experimentally feasible on near-term quantum systems.

Before presenting our state reversal protocol in more detail,
let us elaborate on the claim that it is \emph{nearly optimal}---specifically,
that it has an evolution time within a factor $1.502(1+1/N)$ of the shortest possible.
For any nearest-neighbor spin Hamiltonian $H$, a time scale follows from a normalization that limits the strength of every two-qubit interaction but allows fast local operations.
Up to local unitaries, we can write any two-qubit Hamiltonian in the \emph{canonical form}~\cite{Bennett2002}
\begin{equation}\label{eq:CanonicalForm}
	K \coloneqq \sum_{j \in \set{x,y,z}} \mu_j \sigma_j \otimes \sigma_j \,,
\end{equation}
where $\mu_x \geq \mu_y \geq \abs{\mu_z} \geq 0$ and $\sigma_j$ are the Pauli matrices.
We impose the normalization condition that $\norm{K} = \sum_j \abs{\mu_j} \leq 1$ for all interactions,
where $\norm{\cdot}$ is the spectral norm.
Under this normalization, a \swap~can be optimally implemented in time $3\pi/4$~\cite{Vidal2002},
and our protocol achieves state reversal in time
\begin{equation}
	t_N \coloneqq \pi\sqrt{(N+1)^2 - p(N)}/4\,,\label{eq:reversalTime}
\end{equation}
where $p(N) \coloneqq N \pmod{2}$.
This is equivalent in time to a \swap{} gate circuit of depth ${\sim} N/3$.
As state reversal using only \swap{}s requires depth at least $N-1$~\cite{Alon1994},
our protocol is faster than any \swap{}-based protocol by an asymptotic factor of 3.
Similarly, we can compare to other time-independent Hamiltonian protocols that use nearest-neighbor interactions:
\cite{Christandl2004} implements state transfer in time $N\pi/4$
and \cite{Albanese2004} implements state reversal in time $N\pi/2$
but introduces relative phases in the state as mentioned earlier.
Our time-independent protocol (and some time-dependent protocols~\cite{Raussendorf2005,Fitzsimons2006,Kumar2015})
thus improve upon these previous protocols for state transfer
and state reversal except for a subleading term.

We lower-bound the time for state reversal, which can generate entanglement across a bipartition,
by using bounds on the asymptotic entanglement capacity in a more general model~\cite{Bennett2003,Childs2003}.
The asymptotic entanglement capacity bounds the rate at which entanglement can be generated
by any evolution of a given bipartite Hamiltonian interspersed with arbitrary local operations and classical communication (LOCC) and with arbitrary finite local ancilla spaces.
We give an explicit example of entanglement generated by state reversal
and lower-bound the time using the capacity of a normalized two-qubit interaction in canonical form~\eqref{eq:CanonicalForm},
even allowing for LOCC.
Nonetheless, our state reversal protocol is able to nearly saturate this bound
without classical communication, without ancillas,
and with only nearest-neighbor interactions throughout the chain.

\begin{figure}[tb]
	\centering
	\includegraphics[angle=-90, width=\linewidth]{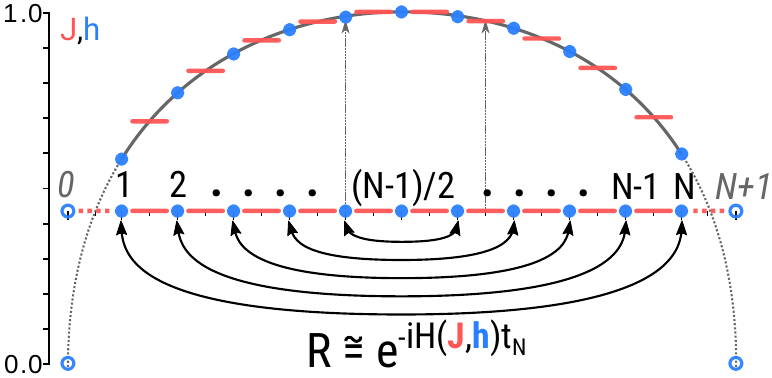}
	\caption{%
		The state reversal operation $\revfull$ (depicted by arrows)
		and an illustration of our time-independent protocol to implement it.
		The nearest-neighbor $\px{k}\px{k+1}$ couplings ($J_k$, red) and on-site $\pz{k}$ fields ($h_k$, blue) are plotted on the $y$-axis.
		Sites $0,N+1$ are ancilla qubits, which are not part of the protocol and are used purely in the analysis.
	}\label{fig:ti}
\end{figure}

We propose a state reversal protocol with Hamiltonian of the form
\begin{equation}
  \label{eq:Hform}
  H(\vec{J},\vec{h}) = J_0\px{1} + \suml{k=1}{N-1}{J_{k}\px{k}\px{k+1}} + J_{N}\px{N} - \suml{k=1}{N}{h_k\pz{k}},
\end{equation}
where the coefficients $\vec{J},\vec{h}$ are engineered as follows.
Letting
\begin{equation}
a_k \coloneqq \pi\sqrt{(N+1)^2 - (N+1-k)^2}/(4t_N)\,,
\end{equation}
for $k\in \mathbb N$,
our protocol is defined as (see also~\cref{fig:ti})
\begin{protocol}\label{prot:ti}
	Let $J_k= a_{2k+1}, h_{k}=a_{2k}$ for all sites $k$, and let $\ha:=H(\vec J,\vec h)$.
	Apply $\uti:= e^{-it_N \ha}$ to the input state.
\end{protocol}

We show in the following sections that our protocol implements state reversal exactly,
up to a global phase (we denote this equivalence by $\cong$).
In other words,
\begin{theorem}
  \label{thm:main}
$\uti \cong \revfull$.
\end{theorem}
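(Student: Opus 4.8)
The plan is to reduce the many-body statement to a single-particle (free-fermion) problem via the Jordan--Wigner transformation, and then recognize the resulting quadratic Hamiltonian as a known ``perfect state transfer'' chain whose single-particle evolution is exactly a reflection. Concretely, I would first extend the chain by the two ancilla sites $0$ and $N+1$ shown in \cref{fig:ti}, so that the boundary terms $J_0\px{1}$ and $J_N\px{N}$ become ordinary bond terms $J_0\px{0}\px{1}$ and $J_N\px{N}\px{N+1}$ after fixing the ancillas appropriately; this makes $\ha$ a uniform-looking transverse-field Ising Hamiltonian on $N+2$ sites. Applying Jordan--Wigner turns $\ha$ into a quadratic fermionic Hamiltonian of the form $\sum_k b_k (c_k^\dagger c_{k+1} + \text{h.c.}) + \dots$ whose single-particle block is a tridiagonal (Jacobi) matrix with off-diagonal entries given by the sequence $a_k \propto \sqrt{(N+1)^2-(N+1-k)^2} = \sqrt{k(2N+2-k)}$. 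This is exactly the Krawtchouk-type hopping amplitude known from the perfect-state-transfer literature (the $J$-couplings of \cite{Christandl2004,Albanese2004}), for which the single-particle propagator at the mirror time is precisely the exchange matrix: $e^{-i t_N h_{\mathrm{sp}}}$ sends site $j \mapsto (N+1-j)$ up to a site-dependent phase.

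The key steps, in order: (i) rewrite $\ha$ on the extended chain and apply Jordan--Wigner, being careful about the parity/string operator and the fact that $\px{}$-type Ising couplings (rather than the more familiar $XX+YY$ hopping) produce both $c^\dagger c^\dagger$ and $c^\dagger c$ terms; (ii) diagonalize—or rather, directly compute the time-$t_N$ evolution of—the single-particle sector, invoking the spectrum of the Krawtchouk chain (eigenvalues equally spaced, so that at $t_N$ every eigenphase is a multiple of $\pi$, forcing the evolution to be $\pm$ the reflection); (iii) lift this single-particle reflection back to the full Fock space to conclude that conjugation by $\uti$ permutes the Majorana/fermion operators according to $k \mapsto N+1-k$ (again up to phases), which is exactly what $\revfull$ does; (iv) track all the accumulated phases and show they combine into a single global phase independent of the input state, and verify that the ancilla qubits $0,N+1$ decouple and return to their initial product state, so that the claimed identity holds on the physical $N$ qubits. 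The equivalence $\cong$ then follows.

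The main obstacle I expect is step (iv)—the bookkeeping of phases and the ancilla decoupling. The single-particle ``perfect mirror'' statement is classical, but here the Hamiltonian is genuinely the Ising form \eqref{eq:Hform}, not a hopping model, so the Bogoliubov rotation mixes particles and holes; one must check that the spurious phases arising from the Jordan--Wigner string and from the Bogoliubov angles at each site assemble into a state-independent global phase rather than a relative phase (which is precisely the defect that plagued the earlier protocols of \cite{Albanese2004,Karbach2005}). A clean way to handle this is to argue at the level of Heisenberg-picture operators: show $\uti\dg \px{k} \uti = \pm \px{N+1-k}$ and $\uti\dg \pz{k}\uti = \pm\pz{N+1-k}$ with consistent signs, since any unitary that conjugates the Pauli group exactly like $\revfull$ must equal $\revfull$ up to a global phase. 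Establishing those operator identities—especially getting the signs to line up for all $k$ simultaneously and confirming the ancillas are untouched—is the technical heart of the argument; everything else is either standard free-fermion machinery or a direct appeal to the Krawtchouk spectrum.
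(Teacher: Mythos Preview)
Your proposal is correct and essentially mirrors the paper's proof: extend by two ancillas in $\ket{++}$, Jordan--Wigner to a quadratic (Majorana) Hamiltonian whose single-particle block is the Krawtchouk chain---which the paper phrases equivalently as the $S_y$ operator of a spin-$(N{+}1/2)$ particle---so that time-$t_N$ evolution is a signed reflection of the Majoranas, and then verify the Heisenberg-picture identities $\uti\px{k}\uti^\dagger=\px{N+1-k}$, $\uti\pz{k}\uti^\dagger=\pz{N+1-k}$. One practical tip: work directly with Majoranas rather than complex fermions plus a Bogoliubov rotation, since the Ising pairing terms are then just nearest-neighbor Majorana hoppings and the phase bookkeeping you flag in step~(iv) reduces to tracking how the Jordan--Wigner string $\parstring{0,k}$ transforms (the paper's \cref{lem:StringAction}), with the invariance of the edge Majoranas $\gamma_0,\gamma_{2N+3}$ supplying exactly the factor that kills the state-dependent phase.
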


\subsection{Proof and analysis of the protocol.}
We prove the correctness of our protocol (i.e., \cref{thm:main}) by mapping the spin chain to a doubled chain of Majorana fermions via a Jordan-Wigner transformation,
describing the action in the Majorana picture, and then mapping back to the spin picture.
To help with the analysis,
we extend the chain with two ancillary sites $\curly{0,N+1}$ called the \emph{edge}, $E$,
and refer to the sites $\curly{1,\ldots, N}$ as the \emph{bulk}, $B$.
We define the transverse-field Ising model (TFIM) Hamiltonian
\begin{equation}\label{eq:HExtended}
	\wtil{H} := \suml{k=0}{N}{a_{2k+1}\px{k}\px{k+1}} - \suml{k=1}{N}{a_{2k}\pz{k}}.
\end{equation}
on the extended chain that reduces to $H$ when the edge is initialized to state $\ket{++}$.
Similarly, we define $\wtil\uti := e^{-i\wtil\ha t_N}$. 
Note that the operator $\wtil H$ (and hence $\wtil U$) acts trivially on $\ket{++}_E$, so this edge state does not change through the course of the evolution.
(Our results also hold using the edge state $\ket{--}_E$, which
is equivalent to negating the sign of the longitudinal fields in~\eqref{eq:Hform}.)
We then prove that in the Heisenberg picture, Pauli matrices on site $k$ map to the corresponding Pauli on site $N+1-k$ for all sites $k$ in the chain.

First, we map to the doubled chain of Majorana fermionic operators by defining
\begin{equation}\label{eq:MajoranaDefinition}
	\gamma_{2k} \coloneqq \parstring{0,k-1}\cdot\px{k},\quad
	\gamma_{2k+1} \coloneqq \parstring{0,k-1}\cdot\py{k}
\end{equation}
at each site,
where we have used the notation $\parstring{a,b}:=\prod_{j=a}^{b}(-\pz{j})$ for the Jordan-Wigner parity string between sites $a$ and $b$.
The $\gamma_k$ are Hermitian and satisfy the Majorana anti-commutation relations $\curly{\gamma_j,\gamma_k}=2\delta_{jk}$.
We also see that $\pz{k} = -i\gamma_{2k}\gamma_{2k+1}$ and $\px{k}\px{k+1} = i\gamma_{2k+1}\gamma_{2k+2}$,
leading~\eqref{eq:HExtended} to take the form
\begin{equation}
 \label{eq:HformMajorana}
    \wtil{H} = i\suml{k=1}{2N+1}{a_k\gamma_{k}\gamma_{k+1}}\,.
  \end{equation}
The Majoranas $\gamma_0,\gamma_{2N+3}$ do not appear in the sum, since $a_0=a_{2N+2}=0$.

\begin{lemma}\label{lem:MajoranaAction}
	The operation $\wtil \uti$ acts on the Majorana operators as
	\begin{equation}
		\wtil\uti \gamma_k\wtil\uti\dg =
			\begin{dcases*}
				\gamma_k & if $k=0,2N+3$,\\
				(-1)^{k-1}\gamma_{2N+3-k} & otherwise.
			\end{dcases*}
	\end{equation}
\end{lemma}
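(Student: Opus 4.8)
The plan is to use that $\wtil\ha$ is quadratic in the Majorana operators, so that $\wtil\uti$ is a Gaussian (free-fermion) evolution and acts \emph{linearly} on the $\gamma_k$. Writing $\wtil\ha=\tfrac{i}{2}\sum_{j,k}\mathcal{A}_{jk}\gamma_j\gamma_k$ with $\mathcal{A}$ the real antisymmetric matrix that is tridiagonal with $\mathcal{A}_{k,k+1}=a_k$, I would first derive, from the Heisenberg equation of motion and the anticommutation relations $\{\gamma_j,\gamma_k\}=2\delta_{jk}$, the closed linear evolution $\wtil\uti\,\gamma_k\,\wtil\uti\dg=\sum_\ell[e^{-2t_N\mathcal{A}}]_{k\ell}\,\gamma_\ell$, reducing the lemma to identifying the matrix exponential $e^{-2t_N\mathcal{A}}$. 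Because $a_0=a_{2N+2}=0$, the coordinates $0$ and $2N+3$ decouple, so $\gamma_0$ and $\gamma_{2N+3}$ commute with $\wtil\ha$ and are fixed; this yields the first case, and it remains to analyze the $(2N+2)\times(2N+2)$ block on coordinates $\{1,\dots,2N+2\}$.

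The key step is to recognize this block. With $n\coloneqq N+1$ one has $a_k=\tfrac{\pi}{4t_N}\sqrt{n^2-(n-k)^2}=\tfrac{\pi}{4t_N}\sqrt{k(2n-k)}$, which is, up to the overall constant, exactly the pattern of nonzero matrix elements of the angular-momentum generator $J_y$ in the spin-$j$ irreducible representation with $j=N+\tfrac12$ (of dimension $2j+1=2N+2$), under the identification of the coordinate vector $e_k$ with the weight state $\ket{j,m}$, $m=k-N-\tfrac32$. Matching the constant---at which point the explicit value of $t_N$ cancels---gives $-2t_N\mathcal{A}=\pm i\pi J_y$, so $e^{-2t_N\mathcal{A}}$ is, up to an overall sign I would pin down, a rotation by $\pi$ about the $y$-axis in this representation.

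To finish, I would invoke the standard fact that a $\pi$-rotation about $y$ reverses the $J_z$ ladder, $e^{\mp i\pi J_y}\ket{j,m}=(-1)^{j\mp m}\ket{j,-m}$, being careful that for half-integer $j$ a full $2\pi$ rotation equals $-\idm$, so the two signs of the exponent differ by an overall factor $-1$. Translating back through $e_k\leftrightarrow\ket{j,m}$, the reflection $m\mapsto-m$ corresponds to $k\mapsto 2N+3-k$ and the accumulated phase works out to $(-1)^{k-1}$, yielding $\wtil\uti\,\gamma_k\,\wtil\uti\dg=(-1)^{k-1}\gamma_{2N+3-k}$ for $k\in\{1,\dots,2N+2\}$, as claimed. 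Evaluating the endpoints $k=1$ and $k=2N+2$ directly against the formula is a convenient consistency check.

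The main obstacle is sign and factor bookkeeping rather than anything conceptual: the sign and factor of $2$ in the Heisenberg-picture exponent, the sign in $\mathcal{A}\propto\pm iJ_y$ (the off-diagonal entries of $J_y$ in the $\ket{j,m}$ basis carry a definite, nonuniform sign pattern), and the phase in the $\pi$-rotation formula, which for half-integer $j$ is branch-sensitive. An alternative that avoids representation theory is to diagonalize the Jacobi (Krawtchouk) matrix $\mathcal{A}$ directly---its spectrum is an arithmetic progression, as is well known from perfect-state-transfer constructions~\cite{Christandl2004}---but the spin-representation identification is the most economical route.
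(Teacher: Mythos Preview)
Your proposal is correct and follows essentially the same approach as the paper: both write $\wtil\ha$ as a Majorana bilinear, identify the $(2N+2)\times(2N+2)$ coupling matrix with (a multiple of) the angular-momentum generator $S_y$ in the spin-$(N+\tfrac12)$ representation, and read off the Heisenberg evolution as a $\pi$-rotation about $y$ that sends $\ket{j,m}\mapsto(-1)^{k-1}\ket{j,-m}$, hence $\gamma_k\mapsto(-1)^{k-1}\gamma_{2N+3-k}$. Your discussion is slightly more explicit about the sign bookkeeping and the half-integer branch ambiguity, but the argument is the same.
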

\begin{proof}
	For the first case, $\wtil H$ has no overlap with operators $\gamma_0$ and $\gamma_{2N+3}$,
	so they are stationary under evolution by $\wtil H$.

	For the remaining cases,
	we make an analogy with the dynamics of the $y$ component of the spin operator, $S_y$, for a spin $s=N+\frac{1}{2}$ particle (as in, e.g.,~\cite{Albanese2004}).
        The Heisenberg evolution of $\gamma_k$ corresponds to the rotation of
	the $S_z$ eigenstate $\ket{s,k-s-1}$ of magnetization $k-s-1$.
	Observing that
	\begin{equation}
	\frac{i\pi}{4t_N}\bra{s,m}S_y\ket{s,m\pr}= a_{s+m+1}(\delta_{m\pr(m+1)}-\delta_{m(m\pr+1)})
	\end{equation}
	(with $\hbar=1$), we can express~\eqref{eq:HformMajorana}
	in the bilinear form $\wtil H=\frac{1}{2}\vec\gamma^\dagger A \vec\gamma$,
	for the vector $\vec \gamma \coloneqq \begin{bmatrix} \gamma_1 & \gamma_2 & \dots & \gamma_{2N+2}\end{bmatrix}$
	and the matrix $A \coloneqq -\pi/(2t_N)S_y$ expressed in the $S_z$ basis.
	Using the Majorana commutation relations,
	we have $\dot{\vec\gamma} = i[\wtil H, \vec\gamma] = 2iA\vec\gamma$,
	so $\vec\gamma (t) = e^{2iAt}\vec\gamma(0)$.
	The Heisenberg evolution of $\gamma_k$ under $\wtil H$ for time $t_N$ is exactly analogous to the (Schr\"{o}dinger) time evolution of the state $\ket{s,k-s-1}$ under $S_y$ for time $\pi$. A $\pi$-rotation under $S_y$ maps
	\begin{equation}
		\ket{s,-s+k-1}\mapsto (-1)^{k-1}\ket{s,s-k+1},
		\label{eq:piFlip}
	\end{equation}
	and correspondingly, $\gamma_k(t_N) = (-1)^{k-1}\gamma_{2N+3-k}$.
\end{proof}

Note that \cref{eq:piFlip} can easily be verified for a spin-$1/2$ particle. Similarly, a spin-$s$ particle may be viewed as a system of $2s$ spin-$\frac{1}{2}$ particles with maximal total spin. In this picture, a $\pi$-rotation under $S_y$ corresponds to independent $\pi$-rotations of each small spin. Since the state $\ket{s,k-s-1}$ is represented by a permutation-symmetric state with $k-1$ up spins, the $\pi$-rotation maps it to a state with $2s-(k-1)$ up spins and introduces a phase $(-1)$ for each up spin,
which is precisely~\eqref{eq:piFlip}.

Due to the signed reversal of the Majoranas in~\cref{lem:MajoranaAction}, the parity string $\parstring{0,k} = i^{b+1-a} \prod_{j=2a}^{2b+1} \gamma_j$ is (with the exception of $\gamma_0$) reflected about the center of the chain with an overall phase that exactly cancels when the product is reordered by increasing site index. The invariance of the edge Majoranas is crucial, as it provides a phase factor that cancels the state-dependent phases when we revert to the spin picture. In particular, we have the following lemma. 
\begin{lemma}\label{lem:StringAction}
	The operation $\wtil U$ acts on the parity strings as
$\wtil\uti\parstring{0,k}\wtil\uti\dg = i\px{0}\px{N+1}\parstring{0,N-k}$ for all $k$.
\end{lemma}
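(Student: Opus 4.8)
The plan is to work entirely in the Majorana picture, where \cref{lem:MajoranaAction} already dictates the dynamics, and to translate back to spins only at the end. First I would rewrite the parity string in terms of Majoranas: since $-\pz{j} = i\gamma_{2j}\gamma_{2j+1}$ we have $\parstring{0,k} = \prod_{j=0}^{k} i\gamma_{2j}\gamma_{2j+1} = i^{k+1}\prod_{j=0}^{2k+1}\gamma_j$ (the Majorana product already being in increasing-index order). Because conjugation by $\wtil\uti$ is an algebra homomorphism and fixes scalars, this gives $\wtil\uti\parstring{0,k}\wtil\uti\dg = i^{k+1}(\wtil\uti\gamma_0\wtil\uti\dg)\prod_{j=1}^{2k+1}(\wtil\uti\gamma_j\wtil\uti\dg)$.

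Next I would substitute \cref{lem:MajoranaAction}. The edge Majorana $\gamma_0$ is invariant, while every $\gamma_j$ with $1\le j\le 2k+1$---all of which have index strictly below $2N+3$---is sent to $(-1)^{j-1}\gamma_{2N+3-j}$. The product of these signs is $\prod_{j=1}^{2k+1}(-1)^{j-1} = (-1)^{\binom{2k+1}{2}} = (-1)^{k}$, and the reflected Majoranas $\gamma_{2N+2}\gamma_{2N+1}\cdots\gamma_{2N-2k+2}$ come out in \emph{decreasing} index order; reversing this block of $2k+1$ Majoranas into increasing order contributes another $(-1)^{\binom{2k+1}{2}} = (-1)^{k}$. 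The two signs cancel, leaving $\wtil\uti\parstring{0,k}\wtil\uti\dg = i^{k+1}\gamma_0\prod_{j=2N-2k+2}^{2N+2}\gamma_j$. This cancellation is the crux: because $\gamma_0 = \px{0}$ sits still rather than being reflected into the bulk, no leftover phase depending on the excitation number survives.

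It then remains to recognize this operator as $i\px{0}\px{N+1}\parstring{0,N-k}$. I would expand the target in Majoranas using $\px{0} = \gamma_0$, $\parstring{0,N-k} = i^{N-k+1}\prod_{j=0}^{2(N-k)+1}\gamma_j$, and $\px{N+1} = \parstring{0,N}\gamma_{2N+2} = i^{N+1}\prod_{j=0}^{2N+2}\gamma_j$ (the middle step from $\gamma_{2(N+1)} = \parstring{0,N}\px{N+1}$ together with $\parstring{0,N}^2 = \idm$). Multiplying these out and cancelling the repeated Majoranas---using $(\gamma_{i_1}\cdots\gamma_{i_m})^2 = (-1)^{\binom{m}{2}}$ and the fact that two disjoint blocks of Majoranas of odd length anticommute---collapses $i\px{0}\px{N+1}\parstring{0,N-k}$ to $\gamma_0\prod_{j=2N-2k+2}^{2N+2}\gamma_j$ times an accumulated power of $i$, and a short check that $4N-3k+5 \equiv k+1 \pmod 4$ shows this power is $i^{k+1}$. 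Matching the two expressions proves the lemma.

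The only genuine obstacle is this final bout of phase bookkeeping: three parity-string-to-Majorana conversions, one block reversal, and one block square each contribute powers of $i$ and signs, and one must confirm they telescope to the single factor $i$ in the statement. A cleaner alternative, which I would fall back on if the direct route became unwieldy, is induction on $k$: the base case $k=0$ is $\wtil\uti(-\pz{0})\wtil\uti\dg = i\gamma_0\gamma_{2N+2} = i\px{0}\px{N+1}\parstring{0,N}$, immediate from \cref{lem:MajoranaAction}; and for the step one writes $\parstring{0,k+1} = \parstring{0,k}(-\pz{k+1})$ and computes, again from \cref{lem:MajoranaAction}, that $\wtil\uti(-\pz{k+1})\wtil\uti\dg = \wtil\uti(i\gamma_{2k+2}\gamma_{2k+3})\wtil\uti\dg = i\gamma_{2N-2k}\gamma_{2N-2k+1} = -\pz{N-k}$, so that multiplying the inductive hypothesis on the right by this and using $\parstring{0,N-k}(-\pz{N-k}) = \parstring{0,N-(k+1)}$ closes the induction with no sign left to chase. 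I would present the direct Majorana computation in the paper, since it makes the phase cancellation transparent.
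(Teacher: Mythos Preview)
Your direct Majorana computation is essentially the paper's proof: expand $\parstring{0,k}$ in Majoranas, apply \cref{lem:MajoranaAction}, reorder, and identify the result. The paper compresses your recognition step by rewriting the reflected block as $\gamma_0\,\parstring{0,N}\parstring{0,N-k}\,\gamma_{2N+2}$ via the identity $\parstring{N+1-k,N}=\parstring{0,N}\parstring{0,N-k}$, and then substituting $\gamma_0=\px{0}$ and $\gamma_{2N+2}=\parstring{0,N}\px{N+1}$ so that the two copies of $\parstring{0,N}$ cancel; this sidesteps expanding $\px{N+1}$ as a full Majorana string and keeps the phase arithmetic shorter than your target-side expansion. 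Your induction-on-$k$ alternative does not appear in the paper and is a genuinely tidy way to eliminate the cumulative phase bookkeeping, since each step reduces to the two-Majorana identity $\wtil\uti(-\pz{k+1})\wtil\uti\dg=-\pz{N-k}$; the cost is that it hides the block-reflection picture that the direct argument (and the paper's) makes explicit.
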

\begin{proof}
  Applying~\cref{lem:MajoranaAction}, we have
	\begin{align}
		\wtil\uti\parstring{0,k}\wtil\uti\dg
	  		&= i^{k+1} {(-1)}^{k(2k+1)} \gamma_0 \prod_{j=1}^{2k+1} \gamma_{2N+3-j}\,.\\
			&= \gamma_0 \parstring{0,N}\parstring{0,N-k}\gamma_{2N+2} \,
                   \end{align}
                   where we reordered the product and used $\parstring{N+1-k,N}=\parstring{0,N}\parstring{0,N-k}$. From the Majorana anti-commutation relations and~\eqref{eq:MajoranaDefinition}, the result follows.
\end{proof}

Now we prove the main theorem.
\begin{proof}[Proof of \cref{thm:main}]
  $U \cong \revfull$ holds iff all bulk observables on the chain transform identically under $U,\revfull$.
  For any operator $\mc O^{k}$ supported on bulk site $k\in\curly{1,\ldots, N}$, we show that $U\mc O^{k} U\dg = \bra{++}\wtil U\mc O^{k}\wtil U\dg\ket{++}_E = \mc O^{N+1-k}$. (Henceforth we drop the edge subscript $E$.)
%  Since $[\wtil U,\cis] = [\wtil R, \cis] = 0$, we may equivalently show that $\wtil U\mc O^{k}\wtil U\dg \equiv_\cis \mc O^{N+1-k}$.
	By~\cref{eq:MajoranaDefinition,lem:MajoranaAction,lem:StringAction}, $\sigma_x^k$ is mapped to
	\begin{align}
		\uti\px{k}\uti^\dagger
			&= \bra{++}\wtil\uti\parstring{0,k-1}\gamma_{2k}\wtil\uti^\dagger\ket{++} \\
			&= -i\bra{++}\px{0}\px{N+1}\parstring{0,N+1-k}\gamma_{2N+3-2k}\ket{++} \\
			&= -i\pz{N+1-k}\py{N+1-k} = \px{N+1-k}\,.
	\end{align}
	Next, we use \cref{lem:StringAction} to show that $\sigma_z^k$ is mapped to
	\begin{align}
		\uti \pz{k}  \uti^\dagger
			&=-\bra{++}\wtil\uti \parstring{0,k-1}\parstring{0,k} \wtil\uti^\dagger\ket{++} \\
			&= \bra{++}\px{0}\px{N+1}\parstring{0,N+1-k}\px{0}\px{N+1}\parstring{0,N-k}\ket{++}\\
			&= \pz{N+1-k}\,.
	\end{align}
	All other observables can be written in terms of the on-site Pauli operators $\px{k},\pz{k}$, so $U$ is identical to $\revfull$, up to global phase.
\end{proof}

\subsection{Time lower bound.}
We now prove a lower bound on the optimal time, $t^*$, to implement state reversal
using normalized local interactions.
Let the entanglement entropy between systems $A$ and $B$ of a bipartite state $\ket{\psi}_{AB}$
be $\entanglement{\ket{\psi}}$,
defined as the local von Neumann entropy $S(\rho) \coloneqq -\tr[\rho \log_2 \rho]$,
for $\rho = \tr_B[\ket\psi\bra\psi]$.
Then, the asymptotic entanglement capacity of a Hamiltonian $H$ that couples systems $A$ and $B$
was shown to equal~\cite{Bennett2003}
\begin{equation}\label{eq:EntanglementCapacityAnc}
	E_H = \sup_{\ket{\psi} \in \mathcal H_{AA'BB'}} \lim_{t \to 0} \frac{\entanglement*{e^{-iHt}\ket{\psi}} - \entanglement{\ket{\psi}}}{t}\,,
\end{equation}
where $\mathcal H_{AA'BB'}$ is the Hilbert space of the bipartite systems $A$ and $B$
with arbitrarily large ancilla spaces $A'$ and $B'$, respectively.
%While the supremum is taken over pure states, note that any mixed states can be purified by adding ancilla to $A\pr,B\pr$.
In particular, for a Hamiltonian of the form $\sigma_x \otimes \sigma_x$,
\cite{Duer2001,Childs2003}~showed that
\begin{equation}\label{eq:XXEntanglingRate}
	\alpha \coloneqq E_{\sigma_x \otimes \sigma_x} = 2\max_y \sqrt{y(1-y)} \log_2\frac{y}{1-y} \approx 1.912.
\end{equation}
This is tighter than the more general small incremental entangling bound $E_H \leq c \norm{H} \log_2 d = 2$
for the conjectured $c=2$~\cite{Bravyi2007} (best known $c=4$~\cite{Audenaert2014})
and where the smallest dimension of $A$ or $B$ gives $d=2$.
Since $E$ is invariant under local unitaries, a direct corollary is that $E_{\sigma_y \otimes \sigma_y}=E_{\sigma_z \otimes \sigma_z}=\alpha$.
% The result~\eqref{eq:XXEntanglingRate} can be generalized to bipartite product Hamiltonians~\cite{Childs2004}.

We now show that \cref{prot:ti} is close to the shortest time possible.
\begin{theorem}\label{thm:lowerBound}
	It holds that $\frac{t_N}{t^*(1+1/N)} \leq \alpha \pi/4 < 1.502$.
\end{theorem}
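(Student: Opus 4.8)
The plan is to lower-bound $t^*$ by tracking how much entanglement state reversal must generate across the central bipartition, and then dividing by the maximum rate at which all the interactions crossing that cut can produce entanglement. First I would fix the bipartition $A = \{1,\dots,\lfloor N/2\rfloor\}$ versus $B = \{\lceil N/2\rceil + 1, \dots, N\}$ (with the middle site, if $N$ is odd, assigned to either side), and exhibit a concrete input state whose image under $\revfull$ has large entanglement entropy across this cut while the input itself has none — the natural choice being a product of Bell pairs, each with one qubit in $A$ and its partner at the mirror-image site in $B$, so that $\revfull$ maps a completely unentangled state to one with $\lfloor N/2 \rfloor$ ebits across the cut. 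This shows $\entanglement{\revfull\ket\psi} - \entanglement{\ket\psi} = \lfloor N/2\rfloor$.

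Next I would bound the rate of entanglement generation. Only the interactions that straddle the $A$/$B$ cut can change the entanglement across it; in a nearest-neighbor chain this is the single bond between sites $\lfloor N/2\rfloor$ and $\lfloor N/2\rfloor + 1$ (and, if the middle site is placed on one side, at most two bonds). Writing that interaction in canonical form~\eqref{eq:CanonicalForm} with $\norm K = \sum_j \abs{\mu_j} \le 1$, convexity of the entanglement capacity together with its local-unitary invariance gives $E_K \le \sum_j \abs{\mu_j} E_{\sigma_j \otimes \sigma_j} = \alpha \sum_j \abs{\mu_j} \le \alpha$ by~\eqref{eq:XXEntanglingRate}. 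Crucially, by~\eqref{eq:EntanglementCapacityAnc} this capacity already accounts for arbitrary finite ancillas and for arbitrary interspersed LOCC (classical communication cannot create entanglement, and local operations on either side do not increase it), so it upper-bounds the instantaneous rate for any protocol in the stated class, not just Hamiltonian ones. Integrating, the total entanglement generable in time $t^*$ across the cut is at most $\alpha t^*$ (or $2\alpha t^*$ if two bonds cross), hence $\lfloor N/2 \rfloor \le \alpha t^*$, i.e. $t^* \ge \lfloor N/2\rfloor/\alpha$.

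Finally I would combine this with the explicit value of $t_N$ from~\eqref{eq:reversalTime}. Since $t_N = \pi\sqrt{(N+1)^2 - p(N)}/4 \le \pi(N+1)/4$ and $\lfloor N/2\rfloor \ge (N-1)/2 \ge \tfrac{N}{2}\cdot\tfrac{1}{1+1/N}$, a short manipulation yields
\begin{equation*}
\frac{t_N}{t^*(1+1/N)} \le \frac{\pi(N+1)/4}{(\lfloor N/2\rfloor/\alpha)(1+1/N)} \le \frac{\alpha\pi}{4},
\end{equation*}
and $\alpha\pi/4 < 1.912\cdot\pi/4 < 1.502$ gives the claim.

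The main obstacle is the bookkeeping at the central cut: one must verify that placing the odd middle site optimally leaves only a single crossing bond (so the rate is $\alpha$, not $2\alpha$), and that the Bell-pair construction genuinely achieves $\lfloor N/2\rfloor$ ebits after reversal while being a legitimate input on the bare chain (no ancillas needed for the lower-bound instance itself). I also need to state carefully why the capacity~\eqref{eq:EntanglementCapacityAnc}, defined as an asymptotic/regularized quantity, still bounds the single-shot rate — this follows because the supremum over ancilla-assisted states dominates the single-shot increment — and why summing rates over at most the crossing bonds is valid, which is just subadditivity of entanglement generation over a tensor product of commuting cut-crossing terms handled one Trotter step at a time.
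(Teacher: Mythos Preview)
Your proposal has a genuine gap in the choice of witness state. You propose Bell pairs between site $k\in A$ and its mirror image $N+1-k\in B$; but such a state is \emph{not} ``completely unentangled'' across the $A|B$ cut---every pair already straddles the cut, so the initial entanglement is $\lfloor N/2\rfloor$, not zero. Worse, each Bell state is invariant under $\swap_{k,N+1-k}$, so $\revfull$ fixes this state and generates no entanglement at all. This is not an accident of your particular choice: for even $N$ the map $\revfull$ simply exchanges the subsystems $A$ and $B$, and entanglement entropy is symmetric under such an exchange, so $\entanglement{\revfull\ket\psi}=\entanglement{\ket\psi}$ for \emph{every} state supported only on the chain. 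No ancilla-free witness exists, contrary to what you assert in your last paragraph.

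The paper's proof repairs this by pairing each chain qubit $k$ with a local ancilla that lives in $A'$ or $B'$ and is \emph{not} acted on by $\revfull$. Initially the state is a product across $\mathcal A|\mathcal B$; after reversal every chain qubit has crossed the cut while its ancilla partner stayed behind, yielding $N$ ebits rather than $\lfloor N/2\rfloor$. This factor of two is essential: even if your state had produced $\lfloor N/2\rfloor$ ebits, the final inequality would give $\alpha\pi/2$, not $\alpha\pi/4$. Indeed your step $\lfloor N/2\rfloor \ge (N-1)/2 \ge \tfrac{N}{2}\cdot\tfrac{1}{1+1/N}=\tfrac{N^2}{2(N+1)}$ is false (it is equivalent to $N^2-1\ge N^2$). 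Once one uses the ancilla construction and obtains $t^*\ge N/\alpha$, the bound $t_N/t^*\le \alpha\pi(N+1)/(4N)=\alpha\pi(1+1/N)/4$ follows immediately. Your treatment of the rate bound $E_H\le\alpha$ across the single cut edge is essentially the same as the paper's and is fine.
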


\begin{proof}
We prove the time lower bound via an upper bound on the rate of increase of entanglement across a cut in the center of the chain (allowing differences of one qubit for odd $N$).
Designate the left half of the cut as subsystem $\mc A$ and the right half as subsystem $\mc B$.
$\mc A$ consists of subsystem $A$ given by the qubit at site $\floor{N/2}$ adjacent to the cut,
and subsystem $A'$ consisting of the remaining qubits to the left of the cut as well as a finite but arbitrary number of ancilla systems that are not part of the chain.
Similarly, $\mathcal B$ consists of subsystem $B$, the qubit at site $\floor{N/2}+1$,
and $B\pr$, the remaining qubits in the right half with an arbitrary finite number of ancilla.

Consider Hamiltonians of the form $H(t) = K(t) + \bar K(t)$ specifying the evolution of the $\mathcal A \mathcal B$ system,
where $K(t)$ is a two-qubit Hamiltonian supported on systems $AB$ (i.e., the cut edge),
while $\bar K$ contains terms supported on $AA\pr$ or $BB\pr$ but not the cut edge $AB$.
For brevity, we drop the time parameter $t$ even though we allow the Hamiltonian to be time-dependent.
We assume that $K$ is expressed in canonical form~\eqref{eq:CanonicalForm} due to equivalence under local unitaries.
Aside from its support, we make no assumptions about the form of $\bar K$
(so the resulting bound is more general than nearest-neighbor interactions).
We call $H$ satisfying these conditions \emph{divisible}
and also call protocols using divisible Hamiltonians divisible.

Observing that $E_H$ is the supremum over a time derivative of the von Neumann entropy of $\rho=\tr_{\mc B}\ket{\psi}\bra{\psi}$, we have
\begin{align}
  E_H &= \sup_{\ket \psi}\tr\paren{-\frac{d\rho}{dt}\log\rho - \rho\frac{d\log\rho}{dt}}\\
  	  &= \sup_{\ket \psi}\tr\paren{-\frac{d\rho}{dt}\log\rho}\,.\label{eq:EHderivative}
\end{align}
The reduced density matrix $\rho$ has time evolution
\begin{equation}
  \frac{d\rho}{dt} = -i\tr_{\mc B}\brac{H,\ket{\psi}\bra{\psi}}.
\end{equation}
We substitute $H=\bar{K}+\suml{j\in\curly{x,y,z}}{}{\mu_j\sigma_j\otimes\sigma_j}$ in the commutator and substitute the time-dependence of $\rho$ into~\cref{eq:EHderivative}.
By linearity of the trace and sublinearity of the supremum, we get
\begin{equation}
  	E_H \leq E_{\bar{K}} + \suml{j\in\curly{x,y,z}}{}{\mu_jE_{\sigma_j\otimes\sigma_j}} \leq \alpha\,,\label{eq:DivisibleRate}
\end{equation}
where we observe that $E_{\bar K}=0$ since $\bar{K}$ does not have support across the cut,
and use the normalization condition $\sum_j \abs{\mu_j} \leq 1$.
This bound holds for all divisible Hamiltonians $H$,
with nearest-neighbor Hamiltonians as a special case.

The entanglement generated by any divisble protocol can now be bounded in time.
We observe that if the protocol contains local measurements then these cannot increase entanglement $\entanglement*{\ket{\psi}}$
and that feedback may be viewed as a particular time-dependence of $H$ conditioned on measurement outcomes.
Therefore,
\eqref{eq:DivisibleRate} bounds the total increase in entanglement across bipartition $\mc {AB}$ over a time $t^*$ by
\begin{equation}
  \label{eq:DeltaEbound}
  \entanglement*{\ket{\psi(t^*)}} - \entanglement*{\ket{\psi(0)}} \leq \alpha t^*
\end{equation}
for any initial state $\ket{\psi(0)}$ acted on by a divisible protocol and LOCC.

Finally, we give an explicit bound on the worst-case time of divisible state reversal protocols by specifying an initial state.
Let the system start in the product state $\ket{\phi}_{\mathcal A} \otimes \ket{\phi}_{\mathcal B}$
where each qubit forms a Bell state with a local ancilla not part of the chain.
Clearly, $\entanglement{\ket{\phi}_{\mathcal A} \otimes \ket{\phi}_{\mathcal B}}=0$.
We perform a reversal $\revfull$ on the chain and get the state $\ket{\psi}_{\mathcal{AB}} \coloneqq \revfull (\ket{\phi}_{\mathcal A} \otimes \ket{\phi}_{\mathcal B})$,
which is maximally entangled, i.e., $\entanglement{\ket{\psi}_{\mathcal{AB}}} = N$.
Then, \eqref{eq:DeltaEbound} gives the bound
\begin{equation}
	t^* \geq \frac{\entanglement{\ket{\psi}_{\mathcal{AB}}} - \entanglement{\ket{\phi}_{\mathcal A} \otimes \ket{\phi}_{\mathcal B}}}{\alpha} \geq \frac{N}{\alpha}
\end{equation}
on any divisible state reversal protocol.
Comparing this to our protocol time~\eqref{eq:reversalTime}, we have
\begin{equation*}
	\frac{t_N}{t^*} \leq \frac{\alpha\pi \sqrt{(N+1)^2 - p(N)}}{4N} \leq \frac{\alpha\pi(1+1/N)}{4}\,.\qedhere
\end{equation*}
\end{proof}

\subsection{Discussion.}
\label{sec:discussion} The time-dependent protocol in \cite{Raussendorf2005} is closely related to our time-independent protocol, and both can be described within the same framework (see appendix). In the time-dependent case, the state is evolved alternately under two restrictions of the Hamiltonian~\eqref{eq:Hform}:
$H(\vec 1,\vec 0)$ (uniform Ising) and $H(\vec 0,\vec 1)$ (uniform transverse field),
each for time $\pi/4$, for a total of $N+1$ rounds.
In the Majorana picture, these Hamiltonians carry out a simultaneous braiding of neighboring Majoranas along even (resp.\ odd) edges of the doubled Majorana chain.
The resulting map matches \cref{lem:MajoranaAction} exactly, implying that the two protocols are identical at the level of Majorana operators. Indeed, any protocol achieving the map in \cref{lem:MajoranaAction} is guaranteed to implement state reversal.

In fact, as shown in the appendix, there is an infinite family of nearest-neighbor, time-independent Hamiltonian protocols for state reversal that generalizes~\cref{prot:ti}. The family is parameterized by a non-negative integer $m$, with modified $\px{k}\px{k+1}$ coupling $J_k^{(m)}\propto \sqrt{\paren{2N+1-2k+4m}\paren{2k+1+4m}}$ and unmodified $\pz{k}$ field strength. \cref{prot:ti} corresponds to the special case of $m=0$. By choosing large $m$, the coupling strength can be engineered to be nearly uniform throughout the chain, which may be a desirable feature in experimental implementations of the protocol~\cite{Karbach2005}. 

State reversal implements a specific permutation of qubits in a spin chain
faster than naively possible using \swap{}s.
More generally, we would like to know how we can perform qubit routing on a spin chain faster than possible naively (for example, by using fast state reversal as a subroutine).
Moreover, while our lower bound shows that sublinear scaling is not possible for qubit routing on a spin chain,
it is still an open question whether a superconstant advantage over routing using \swap{}s is possible with other interaction structures.
Answers to these questions have applications in circuit transformations for quantum architectures~\cite{Childs2019},
where qubit routing is a key subroutine.

\begin{acknowledgments}
	A.B. and A.G. acknowledge support by DoE ASCR Quantum Testbed Pathfinder program (award number DE-SC0019040), DoE ASCR FAR-QC (award number DE-SC0020312), DoE BES Materials and Chemical Sciences Research for Quantum Information Science program (award number DE-SC0019449), NSF PFCQC program, AFOSR, ARO MURI, ARL CDQI, and NSF PFC at JQI.
	E.S. and A.C. acknowledge support by the U.S. Department of Energy, Office of Science, Office of Advanced Scientific Computing Research, Quantum Testbed Pathfinder program (award number DE-SC0019040) and the U.S. Army Research Office (MURI award number W911NF-16-1-0349).
\end{acknowledgments}

\appendix
\section*{Appendix}
\subsection{Time-dependent protocol for reversal.}
\label{sec:appA}
\begin{figure}[tb]
 \centering
 \includegraphics[width=\linewidth]{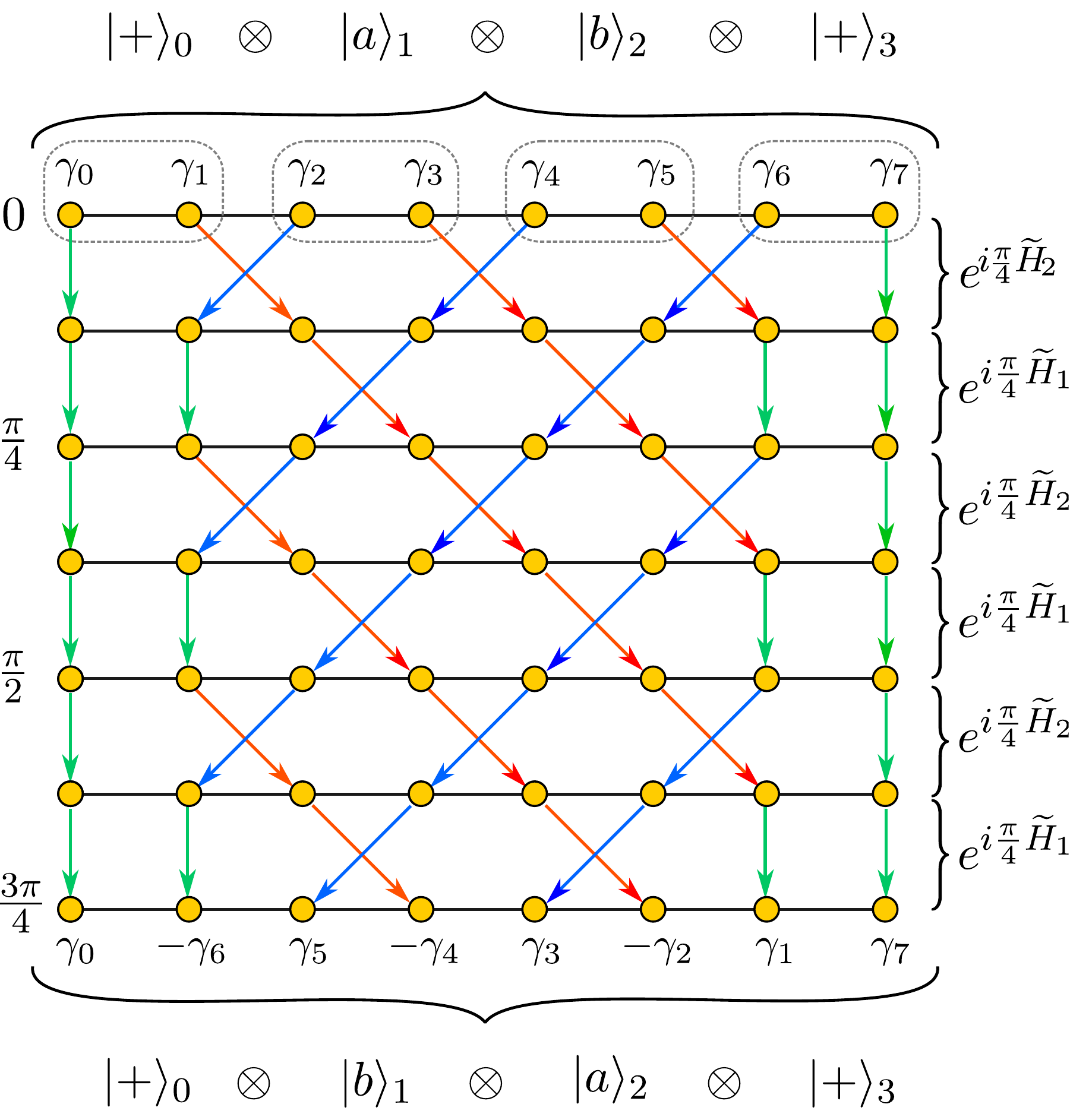}
 \caption{\small Time-dependent reversal protocol for $N=2$ (with two edge ancillas). For any bulk state $\ket{ab}_{12}$ (with edge state $\ket{++}_E$), alternating $\pi/4$ evolutions under $\wtil H_2$, $\wtil H_1$ are applied a total of $2N+2$ times. Each step braids neighboring Jordan-Wigner Majoranas; the right-movers (red) keep the same sign while the left-movers (blue) gain a minus sign. The edge Majoranas $\gamma_0,\gamma_7$ are unchanged (a crucial feature that ensures the correct parity phases), while the intermediate Majoranas undergo reversal of position with alternating sign. The final state in the bulk of the chain is $\ket{ba}_{12}$. }
 \label{fig:td}
\end{figure}
In this section, we give a simple analysis of the time-dependent protocol given in~\cite{Raussendorf2005, Fitzsimons2006} using our methods. The strategy is to prove that this protocol satisfies~\cref{lem:MajoranaAction} from the main text. \cref{lem:StringAction} and \cref{thm:main} are then automatically satisfied. First, we re-introduce the protocol using our notation.

\begin{protocol}\label{prot:td}
 Let $\hh := H(\vec 0,\vec 1)$ and  $\hj := H(\vec 1,\vec 0)$,
 where $\vec 1 = (1,1,\dots, 1)$ and $\vec 0 = (0,0,\dots,0)$. Explicitly, 
\begin{align}
  \hh &= \suml{k=1}{N}{Z_k}\,,\\
     \hj&= X_1 + \suml{k=1}{N-1}{X_kX_{k+1}} + X_N\,.
\end{align}
Apply $\utd:= \paren{e^{i\frac{\pi}{4}\hh}e^{i\frac{\pi}{4}\hj}}^{N+1}$ to the input state.
\end{protocol}
As in the main text, we extend the chain with two ancillary sites $\curly{0,N+1}$ that constitute the edge $E$. The unitary $\utd$ extends to an operator $\wtil\utd\coloneqq \idm_E\otimes\utd$ on the extended chain. Then the following lemma holds.

\begin{lemma}\label{lem:MajoranaActionTD}
	The operation $\wtil \utd$ acts on the Majorana operators as
	\begin{equation}
		\wtil\utd \gamma_k\wtil\utd\dg =
			\begin{dcases*}
				\gamma_k & if $k=0,2N+3$,\\
				(-1)^{k-1}\gamma_{2N+3-k} & otherwise.
			\end{dcases*}
	\end{equation}
\end{lemma}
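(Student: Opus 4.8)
The plan is to show that each of the two elementary evolutions $e^{i\frac{\pi}{4}\hh}$ and $e^{i\frac{\pi}{4}\hj}$ acts on the extended Majorana chain as a simultaneous nearest-neighbor braid, and then to compose $N+1$ such pairs of braids to obtain the signed reversal of \cref{lem:MajoranaActionTD}. First I would rewrite $\hh$ and $\hj$ in Majorana variables using the identities $\pz{k} = -i\gamma_{2k}\gamma_{2k+1}$ and $\px{k}\px{k+1} = i\gamma_{2k+1}\gamma_{2k+2}$ already recorded in the excerpt, together with the edge boundary terms $X_1 = i\gamma_1\gamma_2$ (since $\parstring{0,0}$ acts as $-\pz{0}$ on $\ket{++}_E$, but more cleanly one uses the extended form with $\gamma_1\gamma_2$ and $\gamma_{2N+1}\gamma_{2N+2}$). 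Thus on the extended chain $\wtil\hh = -i\sum_{k=1}^{N}\gamma_{2k}\gamma_{2k+1}$ couples Majoranas across the ``odd'' edges $(2k,2k+1)$, while $\wtil\hj = i\sum_{k=0}^{N}\gamma_{2k+1}\gamma_{2k+2}$ couples them across the ``even'' edges $(2k+1,2k+2)$. Crucially the Majoranas $\gamma_0$ and $\gamma_{2N+3}$ appear in neither sum, so they are fixed throughout — this is the analogue of the first case of the lemma and the feature emphasized in \cref{fig:td}.

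Next I would establish the single-braid action. For a single term $e^{\theta \gamma_a\gamma_b}$ with $a\neq b$ and $\{\gamma_a,\gamma_b\}=0$, conjugation rotates the pair $(\gamma_a,\gamma_b)$ among themselves by angle $2\theta$ and fixes every $\gamma_c$ with $c\notin\{a,b\}$; at $\theta$ chosen so that $2\theta = \pm\pi/2$ this is a signed swap $\gamma_a\mapsto\pm\gamma_b$, $\gamma_b\mapsto\mp\gamma_a$. Since the terms within $\wtil\hh$ (resp.\ $\wtil\hj$) commute — they act on disjoint pairs of Majoranas — the exponential $e^{i\frac\pi4\wtil\hh}$ factorizes into a product of commuting single braids, one per odd edge, and likewise $e^{i\frac\pi4\wtil\hj}$ into commuting braids on the even edges. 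I would record the precise signs: e.g.\ $e^{i\frac\pi4\wtil\hh}$ sends $\gamma_{2k}\leftrightarrow\gamma_{2k+1}$ up to signs, $e^{i\frac\pi4\wtil\hj}$ sends $\gamma_{2k+1}\leftrightarrow\gamma_{2k+2}$ up to signs, leaving $\gamma_0,\gamma_{2N+3}$ untouched. This matches the right-mover/left-mover picture of \cref{fig:td}.

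Then comes the composition. One pair $e^{i\frac\pi4\wtil\hh}e^{i\frac\pi4\wtil\hj}$ advances each interior Majorana by two positions (shifting toward one end, with the endmost ones reflecting off the fixed $\gamma_0$ or $\gamma_{2N+3}$), accumulating a controlled sign. After $N+1$ such pairs, every interior index $k\in\{1,\dots,2N+2\}$ has been carried to $2N+3-k$, i.e.\ the chain of $2N+2$ interior Majoranas is exactly reversed; the edge Majoranas have stayed put. The remaining work is the bookkeeping of signs: I would track the sign picked up by $\gamma_k$ as the product of the $\pm1$'s from each braid it participates in, and argue it equals $(-1)^{k-1}$. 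The cleanest route is to invoke the spin-$s$ analogy exactly as in the proof of \cref{lem:MajoranaAction}: the map on $\vec\gamma=(\gamma_1,\dots,\gamma_{2N+2})$ is linear and orthogonal, generated alternately by the two tridiagonal pieces of $-\pi/(2t_N)\,S_y$ appropriately split into even/odd bands, and the product of $N+1$ Trotter-like steps realizes precisely a $\pi$-rotation about $S_y$ in the spin-$s$ representation with $s=N+\tfrac12$, whose action is the signed reversal $\ket{s,-s+k-1}\mapsto(-1)^{k-1}\ket{s,s-k+1}$ already used in \eqref{eq:piFlip}. Alternatively one checks the sign by direct induction on the number of rounds.

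The main obstacle I anticipate is exactly this sign accounting at the edges: the endmost interior Majoranas $\gamma_1,\gamma_2$ and $\gamma_{2N+1},\gamma_{2N+2}$ are only ever braided by $\wtil\hj$ (the even band) and not by $\wtil\hh$ at their outer edge, since $\gamma_0,\gamma_{2N+3}$ are inert; one must verify that the number of braids each Majorana undergoes, and hence the parity of its accumulated sign, still comes out to $(-1)^{k-1}$ despite this asymmetry. Checking small cases ($N=1,2$, cf.\ \cref{fig:td}) against the closed form, and then either finishing by the explicit $S_y$-rotation identity or by an induction on rounds, should close the argument; once \cref{lem:MajoranaActionTD} is in hand, \cref{lem:StringAction} and hence $\wtil\utd\cong\revfull$ follow verbatim from the main-text proofs.
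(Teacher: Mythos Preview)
Your proposal is correct and follows essentially the same approach as the paper: rewrite $\wtil\hh,\wtil\hj$ in Majorana form, interpret each $\pi/4$ evolution as a product of commuting nearest-neighbor braids on disjoint pairs, compose $N+1$ alternating rounds into an even-odd sort that reverses the interior Majoranas while fixing $\gamma_0,\gamma_{2N+3}$, and then track the accumulated signs. One caveat: your proposed $S_y$ shortcut for the signs is not actually justified as stated---the Trotter-like product of alternating pieces is not a priori the same orthogonal map as the continuous $\pi$-rotation about $S_y$ used in \cref{lem:MajoranaAction}---so the direct induction (or small-$N$ check) you offer as the alternative is the route that matches the paper's terse ``accounting for sign changes.''
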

\begin{proof}
  We use~\eqref{eq:MajoranaDefinition} to write $\utd$ as a product of alternating $\pi/4$-rotations
  under two Hamiltonians $\wtil\hj = i\suml{k=0}{N}{\gamma_{2k+1}\gamma_{2k+2}}$ and $\wtil\hh = i\suml{k=1}{N}{\gamma_{2k}\gamma_{2k+1}}$.
  Since $e^{-\pi/4 \gamma_i\gamma_j}$ is a braiding unitary that maps $\gamma_i\mapsto \gamma_j, \gamma_j\mapsto -\gamma_i, \gamma_{k\neq i,j}\mapsto \gamma_k$, it follows that the operator $e^{i\frac{\pi}{4}\wtil\hh}$ braids nearest-neighbor Majoranas along all odd edges of the chain (except the first and last edge), while $e^{i\frac{\pi}{4}\wtil\hj}$ braids along the even edges.
  Therefore, alternating $\pi/4$ rotations under $\wtil\hj$ and $\wtil\hh$ implement an even-odd sort \cite{Knuth1998} on the chain, as shown in \cref{fig:td}.
  Accounting for sign changes, the Majoranas map as follows: $\gamma_k\mapsto (-1)^{k+1}\gamma_{2N+3-k}$,
  while $\gamma_0,\gamma_{2N+3}$ remain unchanged.
\end{proof}

\subsection{Infinite family of Hamiltonians for state reversal.}
\label{sec:appB}
Reference \cite{Karbach2005} shows that there is an infinite family of XY Hamiltonians that generalize the protocol introduced in \cite{Albanese2004}.
In fact, \cref{prot:ti} is also a special case of an infinite family of protocols parameterized by a single non-negative integer $m$, as given below.
\begin{protocol}
  \label{prot:tim}
  Let $m\in\mathbb{Z}_{\ge 0}$, and
  \begin{align}
    J_k^{(m)}&\coloneqq \frac{\pi}{4}\sqrt{\paren{2k+1+4m}\paren{2N+1-2k+4m}} \\
    h_{k}^{(m)}&\coloneqq \pi\sqrt{k\paren{N+1-k}}
  \end{align}
for all sites $k=1,\ldots, N$. Let $\ha^{(m)} = H(\vec J^{(m)},\vec h^{(m)})$.
Apply $\uti^{(m)}:= e^{-i\ha^{(m)}}$ to the input state.
\end{protocol}

The protocol modifies only the couplings $J_{k}^{(m)}$ as a function of $m$, while the field terms $h_{k}^{(m)}=h_k$ are invariant with $m$. Note that $\uti^{(0)}=\uti$, so~\cref{prot:ti} is indeed a special case of~\cref{prot:tim}. For convenience, we have rescaled the coefficients so that the evolution time is $1$. To prove the correctness of this family of protocols, write the Hamiltonian $\ha^{(m)}$ in terms of Majorana fermions obtained by Jordan-Wigner transformation on the spin chain (extended to edge sites $\curly{0, N+1}$). We have
\begin{equation}
  \ha^{(m)} = \frac{1}{2}\vec{\gamma}\cdot A^{(m)}\cdot \vec{\gamma},
\end{equation}
where $\vec\gamma = \begin{bmatrix}\gamma_1 & \gamma_2 & \cdots & \gamma_{2N+2}\end{bmatrix}$ and $A^{(m)}$ is a $(2N+2)\times (2N+2)$ tridiagonal matrix with entries
\begin{multline}
  \label{eq:AmEntries}
  A^{(m)} =
  i\begin{pmatrix} 0 & J^{(m)}_{0} & & & & \\
    -J^{(m)}_{0} & 0 & h_1 & & & \\
      & -h_{1} & 0 & J^{(m)}_{1}& & \\
     & &\ddots & \ddots & \ddots & \\
     & & & -h_N & 0 & J_{N}^{(m)}\\
     & & & & -J_{N}^{(m)}& 0
 \end{pmatrix}.
\end{multline}
As before, the Heisenberg evolution of the Majoranas under $\ha^{(m)}$ is given by $\vec\gamma (t) = e^{2iA^{(m)}t}\vec \gamma(0)$. \cref{lem:MajoranaAction} shows that the operator $e^{2iA^{(0)}}$ implements reversal. Here we show that $e^{2iA^{(m)}} = e^{2iA^{(0)}}$ for all $m$, which implies that $\uti^{(m)}$ implements state reversal for all $m$. We state the following lemma (due to \cite{Oste2017, Hald1976}) on the spectrum of $A^{(m)}$.
\begin{lemma}\label{lem:tridiagonal}
  Let $A^{(m)}$ be as given in~\cref{eq:AmEntries}, and $s_k \coloneqq \mathrm{sgn}(2N+3-2k)$. Then $A^{(m)}$ has spectrum
  \begin{equation}
E_{k}^{(m)} = \frac{\pi}{4}\paren{2k-2N-3 + 4s_km}
\end{equation}
for $k=1,\ldots, 2N+2$. The corresponding eigenvectors $v_k$ satisfy $v_{kj} = (-1)^{N+k-j + 1/2}v_{k(2N+3-j)}$.
\end{lemma}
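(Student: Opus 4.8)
The plan is to diagonalize $A^{(m)}$ by first removing the factors of $i$, then exploiting the reflection symmetry of the chain, and finally reading off the eigenvector relation from persymmetry. Conjugating by a suitable diagonal unitary $D$ (a geometric sequence of fourth roots of unity) sends each off-diagonal entry $iJ_k^{(m)}$ or $ih_k$ of $A^{(m)}$ to the positive real number $J_k^{(m)}$ or $h_k$, so $A^{(m)}=DB^{(m)}D\dg$, where $B^{(m)}$ is the real symmetric tridiagonal (Jacobi) matrix with zero diagonal and off-diagonal sequence $(c_1,\dots,c_{2N+1})$, $c_{2k+1}=J_k^{(m)}$, $c_{2k}=h_k$. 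Since $J_k^{(m)}$ is invariant under $k\mapsto N-k$ and $h_k$ under $k\mapsto N+1-k$, the sequence is a palindrome, $c_p=c_{2N+2-p}$, so $B^{(m)}$ commutes with the flip $P$, $Pe_j=e_{2N+3-j}$. A Jacobi matrix with nonvanishing off-diagonal has simple spectrum, so each eigenvector of $B^{(m)}$ is automatically a $\pm1$-eigenvector of $P$. It therefore suffices to (i) find the eigenvalues of $B^{(m)}$ and (ii) read off the $P$-parity $\epsilon_k=\pm1$ of the $k$-th eigenvector $w_k$; conjugating back by $D$ then converts $w_{k,2N+3-j}=\epsilon_kw_{kj}$ into $v_{kj}=(-1)^{N+k-j+1/2}v_{k(2N+3-j)}$, the half-integer in the exponent being just the odd power of $i$ contributed by $D$.

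For the eigenvalues, the case $m=0$ is immediate: one checks that $c_p^2=\tfrac{\pi^2}{16}\,p(2N+2-p)$ for \emph{every} $p$, so $A^{(0)}$ is precisely $-\tfrac{\pi}{2}S_y$ for a spin $s=N+\tfrac12$ in the $S_z$ eigenbasis (as already observed in the main text), with eigenvalues $-\tfrac{\pi}{2}m'$, $m'\in\{-s,\dots,s\}$, equal to the claimed $E_k^{(0)}=\tfrac\pi4(2k-2N-3)$. For $m\ge1$ only the odd-position couplings change, $c_{2k+1}^2=\tfrac{\pi^2}{16}(2k+1+4m)(2N+1-2k+4m)$, so $B^{(m)}$ is the Jacobi matrix of a shifted Hahn/Krawtchouk-type family. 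Here I would invoke the cited inverse-spectral results \cite{Oste2017,Hald1976}: a zero-diagonal persymmetric Jacobi matrix is determined uniquely by its (necessarily $0$-symmetric) spectrum, and running the Gragg--Harrod/Hald reconstruction on the prescribed spectrum $\{E_k^{(m)}\}$ returns exactly these $c_p$. Alternatively, one computes the characteristic polynomial through the Sturm recursion $D_n(\lambda)=\lambda D_{n-1}(\lambda)-c_{n-1}^2D_{n-2}(\lambda)$ and identifies $D_{2N+2}(\lambda)$ with $\prod_k(\lambda-E_k^{(m)})$ via hypergeometric contiguous relations. The parity $\epsilon_k$ then follows from the standard Sturm sign-change count: the eigenvector of the $k$-th eigenvalue has $k-1$ sign changes, so $\epsilon_k$ alternates with $k$, reproducing the exponent $N+k-j+\tfrac12$ after conjugation by $D$.

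The principal obstacle is the $m\ge1$ spectrum: unlike the case $m=0$, the odd- and even-position couplings no longer assemble into a single angular-momentum operator, so there is no one-line identification and one must genuinely use the solvability of the shifted family --- most transparently by writing the eigenvectors in closed form (ratios of Gamma functions in the shifted parameters $2k+1+4m$ and $2N+1-2k+4m$, as in finite-oscillator models) and verifying $\sum_q B^{(m)}_{pq}v_{kq}=E_k^{(m)}v_{kp}$ site by site, which reduces to a polynomial identity in $p$ and $m$. Granting \cref{lem:tridiagonal}, the consequence for \cref{prot:tim} is immediate. The eigenvector relation says exactly that each $v_k$ is an eigenvector of the signed-reversal matrix $R$ (acting by $\gamma_j\mapsto(-1)^{j-1}\gamma_{2N+3-j}$), and its $R$-eigenvalue works out to $i(-1)^{N+k}=e^{2iE_k^{(0)}}$ (using $2E_k^{(0)}=\tfrac\pi2(2k-2N-3)$); recall from \cref{lem:MajoranaAction} that $e^{2iA^{(0)}}=R$. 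Since $E_k^{(m)}-E_k^{(0)}=\pi s_km\in\pi\mathbb{Z}$ we have $e^{2iE_k^{(m)}}=e^{2iE_k^{(0)}}$ for every $k$, so with $\{v_k\}$ orthonormal, $e^{2iA^{(m)}}=\sum_k e^{2iE_k^{(m)}}v_kv_k\dg=\sum_k(Rv_k)v_k\dg=R=e^{2iA^{(0)}}$ for all $m\ge0$. Hence $\uti^{(m)}$ acts on the Majoranas exactly as in \cref{lem:MajoranaAction}, and therefore implements state reversal by the argument of \cref{thm:main}.
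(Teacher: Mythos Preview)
Your reduction to a real persymmetric Jacobi matrix via conjugation by $D=\diagonal{i,i^2,\dots,i^{2N+2}}$, together with the Sturm sign-change count to pin down the alternating $P$-parity $\epsilon_k=(-1)^{k-1}$, is exactly the paper's argument for the eigenvector relation (the paper simply cites \cite{Hald1976} for the alternation, whereas you spell out the oscillation argument behind it). The back-conversion $v_{kj}=i^{2N+3-2j}\epsilon_k\,v_{k(2N+3-j)}$ then yields the stated half-integer exponent, as you note.

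The difference is in the spectrum for $m\ge1$, and here your proposal is weaker than the paper's. The paper does not use an inverse-spectral reconstruction at all: it observes that an isospectral rescaling of the off-diagonals (replacing $\sqrt{c_pc_{2N+2-p}}$ at position $p$ by $c_p$ above and $c_{2N+2-p}$ below, which preserves the characteristic polynomial of a tridiagonal matrix) sends $A^{(m)}$ to a Sylvester--Kac matrix $B(n,a)$ with $n=2N+1$, $a=4m$, whose eigenvalues $\pm\tfrac\pi4|2j+1+a|$ are read off directly from \cite{Oste2017}. That is a one-line forward computation. By contrast, your suggestion to ``run the Gragg--Harrod/Hald reconstruction on the prescribed spectrum $\{E_k^{(m)}\}$'' is the inverse direction---reconstruct $c_p$ from the target spectrum and then check it matches---which is circular unless you actually perform the reconstruction, and you do not. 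Your other alternatives (Sturm recursion plus hypergeometric contiguous relations, or brute-force eigenvector verification) are valid in principle but likewise not carried out; you yourself flag this as ``the principal obstacle''. So the $m\ge1$ spectrum claim is, as written, a gap that the paper closes cleanly via the Sylvester--Kac identification.

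Your final paragraph, deducing $e^{2iA^{(m)}}=e^{2iA^{(0)}}=R$ from $E_k^{(m)}-E_k^{(0)}\in\pi\mathbb Z$ and the eigenvector relation, is correct and coincides with the paper's \cref{thm:IntegerSpectrum}.
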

\begin{proof}
  The first claim follows from~\cite{Oste2017}. Via a transformation of the off-diagonals that preserves the spectrum, $A^{(m)}$ can be converted to a matrix $B(n,a)$ of Sylvester-Kac type
  \begin{equation}
    B(n,a)\coloneqq \frac{\pi}{4}
    \begin{pmatrix}
    0 & 1+a & & & & \\
    n+a & 0 & 2 & & & \\
    & n-1 & 0 & 3+a & & \\
    & & \ddots & \ddots & \ddots & \\
    & & & 2 & 0 & n+a \\
    & & & & 1+a & 0
    \end{pmatrix},
 \end{equation}
 for $n=2N+1, a=4m$. As shown in \cite{Oste2017}, the eigenvalues of $B(n,a)$ are given by the formula $\lambda_{\pm,j}=\pm \frac{\pi}{4}|2j+1+a|$ for $j\in\curly{0,\ldots, n}$, and the first claim follows.
 
  For the second claim, we observe again that $A^{(m)}$ may be converted to a real, symmetric, tridiagonal matrix $C^{(m)}$ with positive off-diagonal entries via the similarity transformation $C^{(m)}\coloneqq DA^{(m)}D^{-1}$ where $D = \mathrm{diag}\paren{i,i^2,\ldots, i^{2N+2}}$. Reference~\cite{Hald1976} shows that the eigenvectors $u_k = Dv_k$ of $C^{(m)}$ (ordered by ascending eigenvalue) satisfy $u_{kj} = (-1)^{k-1}u_{k(2N+3-j)}$ for $k=1,\ldots, 2N+2$. Correspondingly, the eigenvalues of $A^{(m)}$ satisfy $v_{kj} = (-1)^{k-1}i^{2N+3-2j}v_{k(2N+3-j)} = (-1)^{N+k-j + 1/2}v_{k(2N+3-j)}$.
\end{proof}
Finally, we show that $e^{2iA^{(m)}}$ implements reversal. 
\begin{theorem}
  \label{thm:IntegerSpectrum}
  For all $m\in\mathbb{Z}_{\ge 0}$, $A^{(m)}$ satisfies $\brac{e^{2iA^{(m)}}}_{jl} = (-1)^{j-1}\delta_{j(2N+3-l)}$.
\end{theorem}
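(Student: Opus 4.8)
The plan is to compute the matrix exponential $e^{2iA^{(m)}}$ directly in the eigenbasis supplied by \cref{lem:tridiagonal}, exploiting the fact that the $m$-dependence of the spectrum of $A^{(m)}$ disappears upon exponentiation.

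First I would record that $A^{(m)}$ is Hermitian: by \eqref{eq:AmEntries} it is $i$ times a real antisymmetric matrix, and the similarity $C^{(m)} = D A^{(m)} D^{-1}$ used in the proof of \cref{lem:tridiagonal} realizes it as a real symmetric tridiagonal matrix with nonzero off-diagonal entries, hence with simple spectrum. Thus $A^{(m)}$ has an orthonormal eigenbasis $v_1,\dots,v_{2N+2}$, unique up to phases, with $A^{(m)} v_k = E_k^{(m)} v_k$, and by \cref{lem:tridiagonal} each $v_k$ obeys the reversal symmetry, which in the sharper integer-power form derived in that proof reads $v_{kj} = (-1)^{k-1} i^{\,2N+3-2j}\, v_{k(2N+3-j)}$. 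The spectral theorem gives $[e^{2iA^{(m)}}]_{jl} = \sum_{k} e^{2iE_k^{(m)}}\, v_{kj}\,\overline{v_{kl}}$, and orthonormality gives completeness, $\sum_k v_{ka}\,\overline{v_{kb}} = \delta_{ab}$.

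Now comes the key point: exponentiation kills $m$. Writing $E_k^{(m)} = \tfrac{\pi}{4}(2k-2N-3) + \pi s_k m$ with $s_k = \pm 1$ and $m\in\mathbb{Z}$ gives $e^{2iE_k^{(m)}} = e^{\frac{i\pi}{2}(2k-2N-3)}\, e^{2\pi i s_k m} = i^{\,2k-2N-3}$, free of $m$. Substituting the reversal symmetry for $v_{kj}$,
\[
  [e^{2iA^{(m)}}]_{jl} = i^{\,2N+3-2j} \sum_k i^{\,2k-2N-3}\,(-1)^{k-1}\, v_{k(2N+3-j)}\,\overline{v_{kl}},
\]
and the per-term prefactor collapses to $i^{\,2k-2N-3}(-1)^{k-1} = i^{\,2k-2N-3}\, i^{\,2k-2} = i^{\,4k-2N-5} = i^{-2N-5}$ using $i^{4k}=1$, so it is independent of $k$ and factors out of the sum. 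Completeness then turns $\sum_k v_{k(2N+3-j)}\overline{v_{kl}}$ into $\delta_{j(2N+3-l)}$, leaving $[e^{2iA^{(m)}}]_{jl} = i^{\,2N+3-2j}\, i^{-2N-5}\,\delta_{j(2N+3-l)} = i^{-2j-2}\,\delta_{j(2N+3-l)}$, and $i^{-2j-2} = (-1)^{j+1} = (-1)^{j-1}$ is the claimed coefficient. In particular $e^{2iA^{(m)}} = e^{2iA^{(0)}}$, so together with \cref{lem:MajoranaAction} each $\uti^{(m)}$ implements state reversal.

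The only step I expect to require genuine care is the phase bookkeeping around the half-integer exponent $(-1)^{N+k-j+1/2}$ appearing in \cref{lem:tridiagonal}; I would avoid choosing a branch by working throughout with the equivalent $v_{kj} = (-1)^{k-1} i^{\,2N+3-2j} v_{k(2N+3-j)}$ form established within that lemma's proof, after which the entire argument reduces to the elementary identities $i^{4k}=1$ and $e^{2\pi i s_k m}=1$.
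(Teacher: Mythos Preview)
Your proof is correct and follows essentially the same route as the paper: expand $e^{2iA^{(m)}}$ in the eigenbasis, note that $e^{2iE_k^{(m)}}$ is independent of $m$, apply the eigenvector reversal symmetry from \cref{lem:tridiagonal}, and collapse the sum via completeness. The only cosmetic difference is that the paper applies the symmetry to the factor $v_{kl}^{*}$ rather than to $v_{kj}$, and writes phases as half-integer powers of $(-1)$; your choice to work with integer powers of $i$ throughout is cleaner and avoids exactly the branch ambiguity you flagged.
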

\begin{proof}
  Write
  \begin{equation}
  e^{2iA^{(m)}} = \suml{k=1}{2N+2}{e^{2iE_k^{(m)}} v_k v_k\dg} = \suml{k=1}{2N+2}{(-1)^{k-N-3/2} v_{k} v_{k}\dg}\,,
\end{equation}
where we dropped the trivial phase $2\pi i m s_k$. The matrix elements of $e^{iA^{(m)}}$ are
  \begin{align}
    \brac{e^{iA^{(m)}}}_{jl} &= \suml{k=1}{2N+2}{(-1)^{k-N-3/2} v_{kj} v_{kl}^*}\\ &= \suml{k=1}{2N+2}{(-1)^{2N+2-l}v_{kj} v_{k(2N+3-l)}^*}\\  &= (-1)^{j-1}\delta_{j(2N+3-l)}\,,
  \end{align}
where in the second step we used \cref{lem:tridiagonal} as $v_{kl}^* = (-1)^{l-k-N-1/2}v_{k(2N+3-l)}^*$.
Therefore, $e^{2iA^{(m)}}$ maps $\gamma_k\mapsto (-1)^{k-1}\gamma_{2N+3-k}$, which implies that the protocol $\uti^{(m)}$ implements state reversal for all $m\in\mathbb{Z}_{\ge 0}$.
\end{proof}

When normalized so that all two-qubit terms are bounded by unity in spectral norm, $\ha^{(m)}$ implements state reversal in time $t_{N}^{(m)} = \frac{(N+1+4m)\pi}{4}$. Therefore, the time cost increases linearly in $m$ and is minimal for \cref{prot:ti} where $m=0$. Next, observe that if we choose $4m\gg N$, the variation in coupling coefficients $J^{(m)}_k$ is small and on the order ${\sim} \frac{1}{8}\paren{\frac{N+1}{2m}}^2$. Therefore, the parameter $m$ quantifies a trade-off between reversal time and the non-uniformity of $J_{k}^{(m)}$. Setting $m = N+1$, for example, yields a variation in the couplings on the order of $3\%$ for any $N$, and gives reversal in time $5N\pi/4$.

% Create the reference section using BibTeX:
\bibliography{bibliography}

\end{document}